\newtheorem{lemma}{Lemma}
\title{VeriFast's separation logic: a logic without laters for modular verification of fine-grained concurrent programs}
\runningtitle{VeriFast's separation logic}
\author{Bart Jacobs
}
\runningauthor{Bart Jacobs}
\affil{KU Leuven, Department of Computer Science, DistriNet Research Group, Leuven, Belgium}
\keywords{Separation Logic, Fine-Grained Concurrency, Modular Verification.}
\definecolor{ghost}{HTML}{CC6600}
\definecolor{darkgreen}{rgb}{0.0,0.5,0.0}
\newcommand{\gmapsto}{\mapsto_\mathsf{g}}
\newcommand{\annot}[1]{{\color{blue} #1}}
\newcommand{\ghost}[1]{{\color{ghost} #1}}
\newcommand{\internal}[1]{{\color{magenta} #1}}
\newcommand{\comment}[1]{{\color{darkgreen} #1}}
\newcommand{\llbrace}{\{\hspace{-3pt}[}
\newcommand{\rrbrace}{]\hspace{-3pt}\}}
\begin{abstract}
VeriFast is one of the leading tools for semi-automated modular formal program verification.
A central feature of VeriFast is its support for \emph{higher-order ghost code}, which enables
its support for expressively specifying fine-grained concurrent modules, without the need for the
\emph{later} modality. We present the first formalization and soundness proof
for this aspect of VeriFast's logic, and we compare it both to Iris, a state-of-the-art logic for fine-grained concurrency which features the later modality, as well as to some recent proposals for Iris-like reasoning without the later modality.
\end{abstract}
\begin{document}
\maketitle
\urlstyle{rm}

\section{Introduction}

VeriFast \cite{fvf} is one of the leading tools for semi-automated modular formal verification of single-threaded and multithreaded C, Java, and Rust programs. It symbolically executes each function/method of the program, using a separation logic \cite{seplogic-csl01,seplogic-lics02} representation of memory. It requires programs to be \emph{annotated} with function/method preconditions and postconditions and loop invariants, as well as \emph{ghost declarations}, such as definitions of separation logic predicates that specify the layout of data structures, and \emph{ghost commands} for folding and unfolding predicates as well as invoking \emph{lemma functions}, functions consisting entirely of ghost code. For expressive specification of fine-grained concurrent modules, it supports \emph{higher-order ghost code}, in the form of \emph{lemma function pointers} and \emph{lemma function pointer type assertions}. While the general ideas underlying this specification approach have been described earlier \cite{popl11}, as have some examples of their use for solving verification challenges \cite{verifythis2012,verifythis2016}, in this paper we present the first formalization and soundness proof for this aspect of VeriFast's logic. We define the programming language and introduce the running example in \S\ref{sec:lang}, define the syntax of annotations in \S\ref{sec:annot}, formalize the program logic implemented by VeriFast's symbolic execution algorithm in \S\ref{sec:verif}, and prove its soundness in \S\ref{sec:soundness}. We discuss related work in \S\ref{sec:relatedwork}.

\section{Programming language}\label{sec:lang}

In order to focus on the complexities of the logic rather than those of the programming language, we present VeriFast's separation logic in the context of a trivial concurrent programming language whose syntax is given in Fig.~\ref{fig:program-syntax} and whose small-step operational semantics is given in Fig.~\ref{fig:program-steps}. An example program that allocates a memory cell, increments it twice in parallel, and then asserts that the cell's value equals two is shown in Fig.~\ref{fig:example}.

\begin{figure}
$$\begin{array}{l}
\mathbf{let}\ \mathsf{x} = \mathbf{cons}(0)\ \mathbf{in}\\
(\ \mathbf{FAA}(\mathsf{x}, 1)\ ||\ \mathbf{FAA}(\mathsf{x}, 1)\ );\\
\mathbf{let}\ \mathsf{v} = {*}\mathsf{x}\ \mathbf{in}\\
\mathbf{assert}\ \mathsf{v} = 2
\end{array}$$
\caption{An example program. $\mathbf{cons}(0)$ allocates a memory cell, initializes it to 0, and returns its address. The $\mathbf{FAA}$ command performs a sequentially consistent atomic fetch-and-add operation. $c_1\,||\, c_2$ is the parallel composition of commands $c_1$ and $c_2$. $*\mathsf{x}$ returns the value stored at address $\mathsf{x}$.}\label{fig:example}
\end{figure}

\begin{figure}
$$\begin{array}{r @{\;} @{\;} l}
& z \in \mathbb{Z}, x \in \mathcal{X}\\
e ::= & z\ |\ x\\
i ::= & \mathbf{cons}(e)\ |\ \mathbf{FAA}(e, e)\ |\ {*}e\ |\ \mathbf{assert}\ e = e\\
c ::= & e\ |\ i\ |\ \mathbf{let}\ x = c\ \mathbf{in}\ c\ |\ (c\;||\;c)
\end{array}$$
\caption{Syntax of the expressions $e$, instructions $i$, and commands $c$ of the programming language. We assume a set $\mathcal{X}$ of program variable names. $c; c'$ is a shorthand for $\mathbf{let}\ \_ = c\ \mathbf{in}\ c'$, where $\_$ is a designated element of $\mathcal{X}$}\label{fig:program-syntax}
\end{figure}

\begin{figure}
\begin{mathpar}
\inferrule{
\ell \notin \mathrm{dom}\,h
}{
(h, \mathbf{cons}(v)) \rightarrow (h[\ell := v], \ell)
}
\and
\inferrule{
\ell \in \mathrm{dom}\,h
}{
(h, \mathbf{FAA}(\ell, z) \rightarrow (h[\ell := h(\ell) + z], h(\ell))
}
\and
\inferrule{
\ell \in \mathrm{dom}\,h
}{
(h, {*}\ell) \rightarrow (h, h(\ell))
}
\and
(h, \mathbf{assert}\ v = v) \rightarrow (h, 0)
\and
(h, \mathbf{let}\ x = v\ \mathbf{in}\ c) \rightarrow (h, c[v/x])
\and
\inferrule{
(h, c) \rightarrow (h', c')
}{
(h, \mathbf{let}\ x = c\ \mathbf{in}\ c'') \rightarrow (h', \mathbf{let}\ x = c'\ \mathbf{in}\ c'')
}
\and
\inferrule{
(h, c) \rightarrow (h', c')
}{
(h, (c\;||\;c'')) \rightarrow (h', (c'\;||\;c''))
}
\and
\inferrule{
(h, c) \rightarrow (h', c')
}{
(h, (c''\;||\;c)) \rightarrow (h', (c''\;||\;c'))
}
\and
(h, v\;||\;v') \rightarrow (h, 0)
\end{mathpar}
\caption{Small-step operational semantics of the programming language}\label{fig:program-steps}
\end{figure}

We define the multiset of threads of a command $c$ as follows:
$$\mathsf{thrds}(c) = \left\{\begin{array}{l l}
\mathsf{thrds}(c_1) & \textrm{if $c = \mathbf{let}\ x = c_1\ \mathbf{in}\ c_2$}\\
\mathsf{thrds}(c_1) \uplus \mathsf{thrds}(c_2) & \textrm{if $c = (c_1\;||\;c_2)$}\\
\llbrace c\rrbrace & \textrm{otherwise}
\end{array}\right.$$

We say a configuration $(h, c)$ is \emph{reducible} if it can make a step:
$$\inferrule{
(h, c) \rightarrow (h', c')
}{
\mathsf{red}\,(h, c)
}$$
We say a configuration is \emph{finished} if its command is a value.
$$\mathsf{finished}\,(h, z)$$
We say a configuration is \emph{okay} if each thread is either reducible or finished.
$$\inferrule{
\forall c_\mathsf{t} \in \mathsf{thrds}(c).\;\mathsf{finished}\,(h, c_\mathsf{t}) \lor \mathsf{red}\,(h, c_\mathsf{t})
}{
\mathsf{ok}\,(h, c)
}$$
We say a configuration is \emph{safe} if each configuration reachable from it is okay.
$$\inferrule{
\forall h', c'.\;(h, c) \rightarrow^* (h', c') \Rightarrow \mathsf{ok}\,(h', c')
}{
\mathsf{safe}\,(h, c)
}$$
We say a program $c$ is safe if $(\emptyset, c)$ is safe. The goal of the logic that we present here is to prove that a given program is safe. This implies that it does not access unallocated memory and that there are no assertion failures.\footnote{In fact, the logic also proves that there are no data races, but for simplicity we do not consider data races here.}

\section{Annotated programs}\label{sec:annot}

\begin{figure*}
$$\begin{array}{r @{\;} l r}
& t \in \mathcal{T} & \textrm{lemma type names}\\
& p \in \mathcal{P} & \textrm{predicate constructor names}\\
& g \in \mathcal{G} & \textrm{ghost variable names}\\
& \pi \in \mathbb{R}^+ & \textrm{fractions}\\
\textrm{ghost values $V$} ::= & z\ |\ (V, V)\ |\ ()\ |\ \{\overline{V}\}\\
| & p(\overline{V}) & \textrm{predicate values}\\
| & \lambda \overline{g}.\;G & \textrm{lemma values}\\
\textrm{ghost expressions $E$} ::= & V\ |\ x\ |\ g\ |\ E + E\\
| & p(\overline{E}) & \textrm{predicate constructor applications}\\
| & (E, E)\ |\ () & \textrm{pair expressions, empty tuple}\\
| & \emptyset\ |\ \{E\}\ |\ E \cup E\ |\ E \setminus E & \textrm{set expressions}\\
\textrm{assertions $a$} ::= & [\pi]E \mapsto E & \textrm{points-to assertions}\\
| & [\pi]E \gmapsto E & \textrm{ghost cell points-to assertions}\\
| & E() & \textrm{predicate assertions}\\
| & [\pi]\mathbf{atomic\_space}(E, E) & \textrm{atomic space assertions}\\
| & E : t(\overline{E}) & \textrm{lemma type assertions}\\
| & \exists g.\;a\\
| & \mathbf{atomic\_spaces}(E) & \textrm{atomic spaces assertions}\\
| & \internal{\mathbf{heap}(E)} & \textrm{heap chunk assertions}\\
| & a * a & \textrm{separating conjunctions}\\
\mathit{gdecl} ::= & \multicolumn{2}{@{} l @{}}{\mathbf{lem\_type}\ t(\overline{g}) = \mathbf{lem}(\overline{g})\ \mathbf{forall}\ \overline{g}\ \mathbf{req}\ a\ \mathbf{ens}\ a}\\
| & \mathbf{pred\_ctor}\ p(\overline{g})() = a\\
I ::= & E(\overline{E})\\
| & \mathbf{gcons}(E)\ |\ {*}E \leftarrow_\mathsf{g} E\\
| & \multicolumn{2}{@{} l @{}}{\mathbf{open\_atomic\_space}(E, E)\ |\ \mathbf{close\_atomic\_space}(E, E)}\\
| & \internal{E \leftarrow_\mathsf{h} E} & \textrm{heap chunk update}\\
G ::= & I\ |\ \mathbf{glet}_\mathsf{i}\ g = G\ \mathbf{in}\ G\\
C ::= & \multicolumn{2}{@{} l @{}}{G\ |\ \mathbf{produce\_lem\_ptr\_chunk}\ t(\overline{E})(\overline{g})\ \{\ G\ \}}\\
| & \multicolumn{2}{@{} l @{}}{\mathbf{create\_atomic\_space}(E, E)\ |\ \mathbf{destroy\_atomic\_space}(E, E)}\\
\hat{c} ::= & \multicolumn{2}{@{} l @{}}{e\ |\ i\ |\ \mathbf{let}\ x = \hat{c}\ \mathbf{in}\ \hat{c}\ |\ \hat{c}\ ||\ \hat{c}\ |\ \mathbf{glet}\ g = C\ \mathbf{in}\ \hat{c}}
\end{array}$$
\caption{Syntax of ghost declarations $\mathit{gdecl}$, ghost instructions $I$, inner ghost commands $G$, outer ghost commands $C$ (collectively called \emph{ghost commands}), and annotated commands $\hat{c}$. Heap chunk assertions and heap chunk update commands are \internal{\emph{internal}}; they are not accepted by VeriFast in source code and are introduced here only for the sake of the soundness proof.}\label{fig:annot-syntax}
\end{figure*}

\begin{figure}
$$\begin{array}{l}
\ghost{\begin{array}{@{} l @{}}
\mathbf{pred\_ctor}\ \mathsf{Inv}(\mathsf{x}, \mathsf{g1}, \mathsf{g2})() =\\
\quad \exists \mathsf{v1}, \mathsf{v2}.\;[1/2]\mathsf{g1} \gmapsto \mathsf{v1} * [1/2]\mathsf{g2} \gmapsto \mathsf{v2} * \mathsf{x} \mapsto \mathsf{v1} + \mathsf{v2}\\
\mathbf{pred\_ctor}\ \mathsf{pre1}(\mathsf{x}, \mathsf{g1}, \mathsf{g2})() =\\
quad [1/2]\mathbf{atomic\_space}(\mathsf{Nx}, \mathsf{Inv}(\mathsf{x}, \mathsf{g1}, \mathsf{g2})) * [1/2]\mathsf{g1} \gmapsto 0\\
\mathbf{pred\_ctor}\ \mathsf{post1}(\mathsf{x}, \mathsf{g1}, \mathsf{g2})() =\\
\quad [1/2]\mathbf{atomic\_space}(\mathsf{Nx}, \mathsf{Inv}(\mathsf{x}, \mathsf{g1}, \mathsf{g2})) * [1/2]\mathsf{g1} \gmapsto 1\\
\mathbf{pred\_ctor}\ \mathsf{pre2}(\mathsf{x}, \mathsf{g1}, \mathsf{g2})() =\\
\quad [1/2]\mathbf{atomic\_space}(\mathsf{Nx}, \mathsf{Inv}(\mathsf{x}, \mathsf{g1}, \mathsf{g2})) * [1/2]\mathsf{g2} \gmapsto 0\\
\mathbf{pred\_ctor}\ \mathsf{post2}(\mathsf{x}, \mathsf{g1}, \mathsf{g2})() =\\
\quad [1/2]\mathbf{atomic\_space}(\mathsf{Nx}, \mathsf{Inv}(\mathsf{x}, \mathsf{g1}, \mathsf{g2})) * [1/2]\mathsf{g2} \gmapsto 1
\end{array}}\\
\\
\mathbf{let}\ \mathsf{x} = \mathbf{cons}(0)\ \mathbf{in}\\
\ghost{\mathbf{glet}\ \mathsf{g1} = \mathbf{gcons}(0)\ \mathbf{in}}\\
\ghost{\mathbf{glet}\ \mathsf{g2} = \mathbf{gcons}(0)\ \mathbf{in}}\\
\ghost{\mathbf{create\_atomic\_space}(\mathsf{Nx}, \mathsf{Inv}(\mathsf{x}, \mathsf{g1}, \mathsf{g2}));}\\
(\\
\quad \ghost{\begin{array}{@{} l @{}}
\mathbf{produce\_lem\_ptr\_chunk}\\
\quad \mathsf{FAA\_ghop}(\mathsf{x}, 1, \mathsf{pre1}(\mathsf{x}, \mathsf{g1}, \mathsf{g2}), \mathsf{post1}(\mathsf{x}, \mathsf{g1}, \mathsf{g2}))(\mathsf{op})\ \{\\
\quad \mathbf{open\_atomic\_space}(\mathsf{Nx}, \mathsf{Inv}(\mathsf{x}, \mathsf{g1}, \mathsf{g2}));_\mathsf{i}\\
\quad \mathsf{op}();_\mathsf{i}\\
\quad *\mathsf{g1} \leftarrow_\mathsf{g} 1;_\mathsf{i}\\
\quad \mathbf{close\_atomic\_space}(\mathsf{Nx}, \mathsf{Inv}(\mathsf{x}, \mathsf{g1}, \mathsf{g2}))\\
\};
\end{array}}\\
\quad \mathbf{FAA}(\mathsf{x}, 1)\\
||\\
\quad \ghost{\begin{array}{@{} l @{}}
\mathbf{produce\_lem\_ptr\_chunk}\\
\quad \mathsf{FAA\_ghop}(\mathsf{x}, 1, \mathsf{pre2}(\mathsf{x}, \mathsf{g1}, \mathsf{g2}), \mathsf{post2}(\mathsf{x}, \mathsf{g1}, \mathsf{g2}))(\mathsf{op})\ \{\\
\quad \mathbf{open\_atomic\_space}(\mathsf{Nx}, \mathsf{Inv}(\mathsf{x}, \mathsf{g1}, \mathsf{g2}));_\mathsf{i}\\
\quad \mathsf{op}();_\mathsf{i}\\
\quad *\mathsf{g2} \leftarrow_\mathsf{g} 1;_\mathsf{i}\\
\quad \mathbf{close\_atomic\_space}(\mathsf{Nx}, \mathsf{Inv}(\mathsf{x}, \mathsf{g1}, \mathsf{g2}))\\
\};
\end{array}}\\
\quad \mathbf{FAA}(\mathsf{x}, 1)\\
);\\
\ghost{\mathbf{destroy\_atomic\_space}(\mathsf{Nx}, \mathsf{Inv}(\mathsf{x}, \mathsf{g1}, \mathsf{g2}));}\\
\mathbf{let}\ \mathsf{v} = {*}\mathsf{x}\ \mathbf{in}\\
\mathbf{assert}\ \mathsf{v} = 2
\end{array}$$
\caption{VeriFast proof of the example program. $\mathsf{Nx} \triangleq ()$.}\label{fig:example-proof}
\end{figure}

\begin{figure}
$$\ghost{\begin{array}{@{} l @{}}
\mathbf{lem\_type}\ \mathsf{FAA\_op}(\mathsf{x}, \mathsf{n}, \mathsf{P}, \mathsf{Q}) = \mathbf{lem}()\\
\quad \mathbf{forall}\ \mathsf{v}\\
\quad \mathbf{req}\ \mathsf{x} \mapsto \mathsf{v} * \mathsf{P}()\\
\quad \mathbf{ens}\ \mathsf{x} \mapsto \mathsf{v} + \mathsf{n} * \mathsf{Q}()\\
\mathbf{lem\_type}\ \mathsf{FAA\_ghop}(\mathsf{x}, \mathsf{n}, \mathsf{pre}, \mathsf{post}) = \mathbf{lem}(\mathsf{op})\\
\quad \mathbf{forall}\ \mathsf{P}, \mathsf{Q}\\
\quad \mathbf{req}\ \mathbf{atomic\_spaces}(\emptyset) * \mathsf{op} : \mathsf{FAA\_op}(\mathsf{x}, \mathsf{n}, \mathsf{P}, \mathsf{Q}) * \mathsf{P}()\\
\quad\quad *\; \mathsf{pre}()\\
\quad \mathbf{ens}\ \mathbf{atomic\_spaces}(\emptyset) * \mathsf{op} : \mathsf{FAA\_op}(\mathsf{x}, \mathsf{n}, \mathsf{P}, \mathsf{Q}) * \mathsf{Q}()\\
\quad\quad *\; \mathsf{post}()\\
\internal{\mathbf{pred\_ctor}\ \mathsf{heap\_}(\mathsf{h})() = \mathbf{heap}(\mathsf{h})}\\
\end{array}}$$
\caption{The ghost prelude (built-in ghost declarations). The declaration of $\mathsf{heap\_}$ is \internal{\emph{internal}}. It is not meant to be used in annotated programs; it is introduced here only for the sake of the soundness proof.}\label{fig:prelude}
\end{figure}

When verifying a program with VeriFast, the user must first insert \emph{annotations}, specifically \emph{ghost declarations} and \emph{ghost commands}, to obtain an \emph{annotated program}. The syntax of ghost declarations and ghost commands is shown in Fig.~\ref{fig:annot-syntax}. An annotated version of the example program is shown in Fig.~\ref{fig:example-proof}. An annotated program may refer to ghost constructs declared in the \emph{VeriFast prelude}, shown in Fig.~\ref{fig:prelude}.

There are two kinds of ghost declarations: \emph{lemma type declarations} and \emph{predicate constructor declarations}. These give meaning to \emph{lemma type names} $t \in \mathcal{T}$ and \emph{predicate constructor names} $p \in \mathcal{P}$. Conceptually, a lemma type is a predicate over a \emph{lemma value} $\lambda \overline{g}.\;C$, a parameterized ghost command. A predicate constructor is a named, parameterized assertion. Applying a predicate constructor to an argument list produces a \emph{predicate value} $p(\overline{V})$.

Besides integers, lemma values, and predicate values, ghost values may be pairs of ghost values, unit values $()$, and finite sets of ghost values.

Resources may be shared among threads using \emph{atomic spaces} (analogous to Iris \emph{invariants} \cite{iris1,iris-ground-up}). An atomic space is (non-uniquely) identified by a \emph{name} (any ghost value) and an \emph{invariant} (a predicate value) (but there may be multiple atomic spaces with the same name and invariant at any given time). At any point in time, ownership of the stock of logical resources in the system is distributed among the threads and the atomic spaces. That is, at any point, each logical resource is owned either by exactly one thread or by exactly one atomic space, or has been leaked irrecoverably. (More precisely, given that fractional resources are supported, the bundles of resources owned by the threads and the atomic spaces sum up to a logical heap that contains each physical points-to chunk only once and each $\mathbf{atomic\_space}$ chunk only as many times as there are atomic spaces with that name and invariant, etc.) Creating an atomic space transfers a bundle of resources satisfying the atomic space's invariant from the creating thread to the newly created atomic space. Opening an atomic space transfers the resources owned by the atomic space to the opening thread; closing an atomic space again transfers a bundle of resources satisfying the atomic space's invariant from the closing thread to the atomic space. Destroying an atomic space transfers ownership of the resources owned by the atomic space to the destroying thread. To destroy an atomic space, the destroying thread must have full ownership of the atomic space. To open it, only partial ownership is required. (To close it, no ownership is required. If no such atomic space exists, the resources are leaked.) To prevent the same atomic space from being opened when it is already open, the set of opened atomic spaces is tracked using an $\mathbf{atomic\_spaces}(S)$ chunk, where $S$ is a set of the name-invariant pairs of the atomic spaces that are currently open.\footnote{This means it is not possible to open two atomic spaces with the same name-invariant pair at the same time, even if multiple such atomic spaces exist.}

Lemma type assertions $V : t(\overline{V})$ assert that a given lemma value $V$ is of a given lemma type $t$, applied to a given lemma type argument list $\overline{V}$. Such assertions are \emph{linear}. To call a lemma, a full lemma type chunk for that lemma must be available, and it becomes unavailable for the duration of the call. A lemma type chunk is produced by the $\mathbf{produce\_lem\_ptr\_chunk}$ ghost command. Since that command is not allowed inside lemmas, the stock of lemma type chunks in the system only decreases as the lemma call stack grows; absence of infinite lemma recursion follows trivially.\footnote{This is a simplification with respect to the actual VeriFast tool, which does support production of lemma type chunks inside lemmas, using a variant of the $\mathbf{produce\_lem\_ptr\_chunk}$ syntax that additionally takes a block of ghost code. The chunk is available only until the end of that block. Now, suppose there is an infinite lemma call stack. Since the program text contains only finitely many $\mathbf{produce\_lem\_ptr\_chunk}$ commands, among the lemmas that appear infinitely often in that call stack, there is one that is syntactically maximal, i.e. that is not itself contained within another lemma that also appears infinitely often. It follows that from some point on, the call stack contains no lemmas bigger than this maximal one. Since a lemma type chunk for a given lemma can only be produced by a bigger lemma (since the latter's body must contain a $\mathbf{produce\_lem\_ptr\_chunk}$ command producing the former's), the stock of lemma type chunks for this maximal lemma will, from that point on, only decrease, which leads to a contradiction. (Note: for measuring the size of a lemma, the size of contained lemma \emph{values} is not taken into account. It follows that substitution of values for ghost variables never affects the size of a lemma.)}

Intermediate results produced by ghost commands can be stored in \emph{ghost variables}, which are like program variables except that they are in a separate namespace and can therefore never hide a program variable.\footnote{In the actual VeriFast tool, they are in the same namespace, but VeriFast checks that real code never uses a ghost variable.} To facilitate reasoning about concurrent programs, annotated programs can furthermore allocate \emph{ghost cells}; these are like physical memory locations except that they are allocated in a separate \emph{ghost heap} and mutated using separate \emph{ghost cell mutation commands}.

Points-to chunks, ghost points-to chunks, and atomic spaces can be owned \emph{fractionally}, which allows them to be shared temporarily or permanently among multiple threads. A \emph{fractional chunk} has a \emph{coefficient} which is a positive real number.

\section{Verification of annotated programs}\label{sec:verif}

In this section we formalize the program logic implemented by VeriFast's symbolic execution algorithm. We abstract over the mechanics of symbolic execution, the essence of which is described in Featherweight VeriFast \cite{fvf}. In particular, the tool generally requires $\mathbf{open}$ and $\mathbf{close}$ ghost commands to unfold and fold predicates. Instead, here we use \emph{semantic assertions}; predicates are fully unfolded during the interpretation of syntactic assertions as semantic assertions.

Core to VeriFast's verification approach is the concept of a \emph{chunk} $\alpha$:
$$\begin{array}{r @{\;} l}
\alpha ::= & V \mapsto V\ |\ V \mapsto_\mathsf{g} V\ |\ \mathbf{atomic\_space}(V, V)\ |\ V : t(\overline{V})\\
& |\ \mathbf{atomic\_spaces}(V)\ |\ \mathbf{heap}(V)
\end{array}$$
A \emph{logical heap} $H$ is a function from chunks to nonnegative real numbers:
$$H \in \mathit{LogicalHeaps} = \mathit{Chunks} \rightarrow \mathbb{R}^+$$

We say a logical heap is \emph{weakly consistent}, denoted $\mathsf{wok}\;H$ if no points-to chunk or ghost points-to chunk is present with a coefficient greater than 1, and no two (fractions of) points-to chunks or two (fractions of) ghost points-to chunks are present with the same left-hand side (address) but a different right-hand side (stored value).

We define \emph{satisfaction} of an assertion $a$ by a logical heap $H$, denoted $H \vDash a$, inductively as follows:
\begin{mathpar}
\inferrule{
H(\alpha) \ge \pi
}{
H \vDash [\pi]\alpha
}
\and
\inferrule{
\mathbf{pred\_ctor}\ p(\overline{g})() = a\\
|\overline{V}| = |\overline{g}|\\
H \vDash a[\overline{V}/\overline{g}]
}{
H \vDash p(\overline{V})()
}
\and
\inferrule{
H \vDash a[V/g]
}{
H \vDash \exists g.\;a
}
\and
\inferrule{
H \vDash a\\
H' \vDash a'
}{
H + H' \vDash a * a'
}
\end{mathpar}

A \emph{semantic assertion} is a set of logical heaps. We define the \emph{interpretation} $\llbracket a\rrbracket$ of an assertion as a semantic assertion as $\llbracket a\rrbracket = \{H\ |\ H\vDash a\}$.

We define \emph{correctness} of an annotated command or ghost command $\dot{c}$ with respect to a precondition $P$ and a postcondition $Q$ (both semantic assertions), denoted $\{P\}\ \dot{c}\ \{Q\}$, inductively in Fig.~\ref{fig:correctness}. We define implication of semantic assertions as follows:
$$P \Rightarrow Q \triangleq \forall H \in P.\;\mathsf{wok}\,H \Rightarrow H \in Q$$

\begin{figure*}
\begin{mathpar}
\{\mathsf{True}\}\ \mathbf{cons}(V)\ \{\mathsf{res} \mapsto V\}
\and
\{[\pi]\ell \mapsto V\}\ {*}\ell\ \{[\pi]\ell \mapsto V \land \mathsf{res} = V\}
\and
\{\ell \mapsto V\}\ \ell \leftarrow V'\ \{\ell \mapsto V'\}
\and
\inferrule{
\{P\}\ \hat{c}\ \{R\}\\
\forall v.\;\{R[v/\mathsf{res}]\}\ \hat{c}'[v/x]\ \{Q\}
}{
\{P\}\ \mathbf{let}\ x = \hat{c}\ \mathbf{in}\ \hat{c}'\ \{Q\}
}
\and
\begin{array}{l}
\{V : \mathsf{FAA\_ghop}(\ell, z, V', V'') * \llbracket V'()\rrbracket\}\cr
\mathbf{FAA}(\ell, z)\cr
\{V : \mathsf{FAA\_ghop}(\ell, z, V', V'') * \llbracket V''()\rrbracket\}
\end{array}
\and
\inferrule{
\{P\}\ \hat{c}\ \{Q\}\\
\{P'\}\ \hat{c}'\ \{Q'\}
}{
\{P * P'\}\ \hat{c}\ ||\ \hat{c}'\ \{Q * Q'\}
}
\and
\{\mathsf{True}\}\ \mathbf{gcons}(V)\ \{\mathsf{res} \gmapsto V\}
\and
\{\ell \gmapsto V\}\ \ell \leftarrow_\mathsf{g} V'\ \{\ell \gmapsto V'\}
\and
\{\llbracket V'()\rrbracket\}\ \mathbf{create\_atomic\_space}(V, V')\ \{\mathbf{atomic\_space}(V, V')\}
\and
\inferrule{
(V, V') \notin S
}{
\begin{array}{l}
\{\mathbf{atomic\_spaces}(S) * [\pi]\mathbf{atomic\_space}(V, V')\}\cr
\mathbf{open\_atomic\_space}(V, V')\cr
\{\mathbf{atomic\_spaces}(S \cup \{(V, V')\}) * [\pi]\mathbf{atomic\_space}(V, V') * \llbracket V'()\rrbracket\}
\end{array}
}
\and
\begin{array}{l}
\{\mathbf{atomic\_spaces}(S) * \llbracket V'()\rrbracket\}\cr
\mathbf{close\_atomic\_space}(V, V')\cr
\{\mathbf{atomic\_spaces}(S \setminus \{(V, V')\})\}
\end{array}
\and
\{\mathbf{atomic\_space}(V, V')\}\ \mathbf{destroy\_atomic\_space}(V, V')\ \{\llbracket V'()\rrbracket\}
\and
\inferrule{
\mathbf{lem\_type}\ t(\overline{g}) = \mathbf{lem}(\overline{g}')\ \mathbf{req}\ a\ \mathbf{ens}\ a'\\
|\overline{V}| = |\overline{g}|\\
|\overline{g}''| = |\overline{g}'|\\
\forall \overline{V}'.\;|\overline{V}'| = |\overline{g}'| \Rightarrow \{\llbracket a[\overline{V}/\overline{g},\overline{V}'/\overline{g}']\rrbracket\}\ G[\overline{V}'/\overline{g}'']\ \{\llbracket a'[\overline{V}/\overline{g},\overline{V}'/\overline{g}']\rrbracket\}
}{
\{\mathsf{True}\}\ \mathbf{produce\_lem\_ptr\_chunk}\ t(\overline{V})(\overline{g}'')\ \{\ G\ \}\ \{\mathsf{res} : t(\overline{V})\}
}
\and
\inferrule{
\mathbf{lem\_type}\ t(\overline{g}) = \mathbf{lem}(\overline{g}')\ \mathbf{req}\ a\ \mathbf{ens}\ a'\\
|\overline{V}'| = |\overline{g}'|
}{
\{V : t(\overline{V}) * \llbracket a[\overline{V}/\overline{g},\overline{V}'/\overline{g}']\rrbracket\}\ V(\overline{V}')\ \{V : t(\overline{V}) * \llbracket a'[\overline{V}/\overline{g},\overline{V}'/\overline{g}']\rrbracket\}
}
\and
\{\mathbf{heap}(h) * \ell \mapsto \_\}\ \ell \leftarrow_\mathsf{h} v\ \{\mathsf{heap}(h[\ell := v]) * \ell \mapsto v\}
\and
\inferrule{
\{P\}\ \dot{c}\ \{Q\}
}{
\{P * R\}\ \dot{c}\ \{Q * R\}
}
\and
\inferrule{
\forall V.\;\{P[V/g]\}\ \dot{c}\ \{Q\}
}{
\{\exists g.\;P\}\ \dot{c}\ \{Q\}
}
\and
\inferrule{
P \Rightarrow P'\\
\{P'\}\ \dot{c}\ \{Q\}\\
Q \Rightarrow Q'
}{
\{P\}\ \dot{c}\ \{Q'\}
}
\end{mathpar}
\caption{Correctness of annotated commands and ghost commands. We use $\dot{c}$ to range over both annotated commands and ghost commands.}\label{fig:correctness}
\end{figure*}

Note: nesting $\mathbf{produce\_lem\_ptr\_chunk}$ commands is not allowed.

A correctness proof outline for the example annotated program is shown in Fig.~\ref{fig:example-outline}.

\begin{figure*}
$$\begin{array}{l}
\annot{\{\mathbf{emp}\}}\\
\mathbf{let}\ \mathsf{x} = \mathbf{cons}(0)\ \mathbf{in}\ \ghost{\mathbf{glet}\ \mathsf{g1} = \mathbf{gcons}(0)\ \mathbf{in}}\ \ghost{\mathbf{glet}\ \mathsf{g2} = \mathbf{gcons}(0)\ \mathbf{in}}\\
\annot{\{\mathsf{x} \mapsto 0 * \mathsf{g1} \gmapsto 0 * \mathsf{g2} \gmapsto 0\}}\\
\comment{\mathbf{close}\ \mathsf{Inv}(\mathsf{x}, \mathsf{g1}, \mathsf{g2})();}\\
\annot{\{\mathsf{Inv}(\mathsf{x}, \mathsf{g1}, \mathsf{g2})() * [1/2]\mathsf{g1} \gmapsto 0 * [1/2]\mathsf{g2} \gmapsto 0\}}\\
\ghost{\mathbf{create\_atomic\_space}(\mathsf{Nx}, \mathsf{Inv}(\mathsf{x}, \mathsf{g1}, \mathsf{g2}));}\\
\annot{\{\mathbf{atomic\_space}(\mathsf{Nx}, \mathsf{Inv}(\mathsf{x}, \mathsf{g1}, \mathsf{g2})) * [1/2]\mathsf{g1} \gmapsto 0 * [1/2]\mathsf{g2} \gmapsto 0\}}\\
(\\
\quad \annot{\{[1/2]\mathbf{atomic\_space}(\mathsf{Nx}, \mathsf{Inv}(\mathsf{x}, \mathsf{g1}, \mathsf{g2})) * [1/2]\mathsf{g1} \gmapsto 0\}}\\
\quad \ghost{\begin{array}{@{} l @{}}
\mathbf{glet}\ \mathsf{lem} = \mathbf{produce\_lem\_ptr\_chunk}\ \mathsf{FAA\_ghop}(\mathsf{x}, 1, \mathsf{pre1}(\mathsf{x}, \mathsf{g1}, \mathsf{g2}), \mathsf{post1}(\mathsf{x}, \mathsf{g1}, \mathsf{g2}))(\mathsf{op})\ \{\\
\quad \annot{\textrm{For all $\mathsf{P}, \mathsf{Q},$}}\\
\quad \annot{\{\mathbf{atomic\_spaces}(\emptyset) * \mathsf{op} : \mathsf{FAA\_op}(\mathsf{x}, 1, \mathsf{P}, \mathsf{Q}) * \mathsf{P}() * \mathsf{pre1}(\mathsf{x}, \mathsf{g1}, \mathsf{g2})()\}}\\
\quad \comment{\mathbf{open}\ \mathsf{pre1}(\mathsf{x}, \mathsf{g1}, \mathsf{g2})();}\\
\quad \annot{\left\{\begin{array}{l}
\mathbf{atomic\_spaces}(\emptyset) * \mathsf{op} : \mathsf{FAA\_op}(\mathsf{x}, 1, \mathsf{P}, \mathsf{Q}) * \mathsf{P}() * {}\\
{}[1/2]\mathbf{atomic\_space}(\mathsf{Nx}, \mathsf{Inv}(\mathsf{x}, \mathsf{g1}, \mathsf{g2})) * [1/2]\mathsf{g1} \gmapsto 0
\end{array}\right\}}\\
\quad \mathbf{open\_atomic\_space}(\mathsf{Nx}, \mathsf{Inv}(\mathsf{x}, \mathsf{g1}, \mathsf{g2}));\ \comment{\mathbf{open}\ \mathsf{Inv}(\mathsf{x}, \mathsf{g1}, \mathsf{g2})();}\\
\quad \annot{\left\{\begin{array}{l}
\exists \mathsf{v2}.\;\mathbf{atomic\_spaces}(\{(\mathsf{Nx}, \mathsf{Inv}(\mathsf{x}, \mathsf{g1}, \mathsf{g2}))\}) * \mathsf{op} : \mathsf{FAA\_op}(\mathsf{x}, 1, \mathsf{P}, \mathsf{Q}) * \mathsf{P}() * {}\\
{}[1/2]\mathbf{atomic\_space}(\mathsf{Nx}, \mathsf{Inv}(\mathsf{x}, \mathsf{g1}, \mathsf{g2})) * \mathsf{g1} \gmapsto 0 * [1/2]\mathsf{g2} \gmapsto \mathsf{v2} * \mathsf{x} \mapsto \mathsf{v2}
\end{array}\right\}}\\
\quad \annot{\textrm{For all $\mathsf{v2}$,}}\\
\quad \annot{\left\{\begin{array}{l}
\mathbf{atomic\_spaces}(\{(\mathsf{Nx}, \mathsf{Inv}(\mathsf{x}, \mathsf{g1}, \mathsf{g2}))\}) * \mathsf{op} : \mathsf{FAA\_op}(\mathsf{x}, 1, \mathsf{P}, \mathsf{Q}) * \mathsf{P}() * {}\\
{}[1/2]\mathbf{atomic\_space}(\mathsf{Nx}, \mathsf{Inv}(\mathsf{x}, \mathsf{g1}, \mathsf{g2})) * \mathsf{g1} \gmapsto 0 * [1/2]\mathsf{g2} \gmapsto \mathsf{v2} * \mathsf{x} \mapsto \mathsf{v2}
\end{array}\right\}}\\
\quad \mathsf{op}();\\
\quad \annot{\left\{\begin{array}{l}
\mathbf{atomic\_spaces}(\{(\mathsf{Nx}, \mathsf{Inv}(\mathsf{x}, \mathsf{g1}, \mathsf{g2}))\}) * \mathsf{op} : \mathsf{FAA\_op}(\mathsf{x}, 1, \mathsf{P}, \mathsf{Q}) * \mathsf{Q}() * {}\\
{}[1/2]\mathbf{atomic\_space}(\mathsf{Nx}, \mathsf{Inv}(\mathsf{x}, \mathsf{g1}, \mathsf{g2})) * \mathsf{g1} \gmapsto 0 * [1/2]\mathsf{g2} \gmapsto \mathsf{v2} * \mathsf{x} \mapsto 1 + \mathsf{v2}
\end{array}\right\}}\\
\quad *\mathsf{g1} \leftarrow_\mathsf{g} 1;\\
\quad \annot{\left\{\begin{array}{l}
\mathbf{atomic\_spaces}(\{(\mathsf{Nx}, \mathsf{Inv}(\mathsf{x}, \mathsf{g1}, \mathsf{g2}))\}) * \mathsf{op} : \mathsf{FAA\_op}(\mathsf{x}, 1, \mathsf{P}, \mathsf{Q}) * \mathsf{Q}() * {}\\
{}[1/2]\mathbf{atomic\_space}(\mathsf{Nx}, \mathsf{Inv}(\mathsf{x}, \mathsf{g1}, \mathsf{g2})) * \mathsf{g1} \gmapsto 1 * [1/2]\mathsf{g2} \gmapsto \mathsf{v2} * \mathsf{x} \mapsto 1 + \mathsf{v2}
\end{array}\right\}}\\
\quad \comment{\mathbf{close}\ \mathsf{Inv}(\mathsf{x}, \mathsf{g1}, \mathsf{g2})();}\ \mathbf{close\_atomic\_space}(\mathsf{Nx}, \mathsf{Inv}(\mathsf{x}, \mathsf{g1}, \mathsf{g2}));\\
\quad \annot{\left\{\begin{array}{l}
\mathbf{atomic\_spaces}(\emptyset) * \mathsf{op} : \mathsf{FAA\_op}(\mathsf{x}, 1, \mathsf{P}, \mathsf{Q}) * \mathsf{Q}() * {}\\
{}[1/2]\mathbf{atomic\_space}(\mathsf{Nx}, \mathsf{Inv}(\mathsf{x}, \mathsf{g1}, \mathsf{g2})) * [1/2]\mathsf{g1} \gmapsto 1
\end{array}\right\}}\\
\quad \comment{\mathbf{close}\ \mathsf{post1}(\mathsf{x}, \mathsf{g1}, \mathsf{g2})()}\\
\quad \annot{\{\mathbf{atomic\_spaces}(\emptyset) * \mathsf{op} : \mathsf{FAA\_op}(\mathsf{x}, 1, \mathsf{P}, \mathsf{Q}) * \mathsf{Q}() * \mathsf{post1}(\mathsf{x}, \mathsf{g1}, \mathsf{g2})()\}}\\
\}\ \mathbf{in}\\
\end{array}}\\
\quad \annot{\{[1/2]\mathbf{atomic\_space}(\mathsf{Nx}, \mathsf{Inv}(\mathsf{x}, \mathsf{g1}, \mathsf{g2})) * [1/2]\mathsf{g1} \gmapsto 0 * \mathsf{lem} : \mathsf{FAA\_ghop}(\mathsf{x}, 1, \mathsf{pre1}(\mathsf{x}, \mathsf{g1}, \mathsf{g2}), \mathsf{post1}(\mathsf{x}, \mathsf{g1}, \mathsf{g2}))\}}\\
\quad \comment{\mathbf{close}\ \mathsf{pre1}(\mathsf{x}, \mathsf{g1}, \mathsf{g2})();}\\
\quad \annot{\{\mathsf{pre1}(\mathsf{x}, \mathsf{g1}, \mathsf{g2})() * \mathsf{lem} : \mathsf{FAA\_ghop}(\mathsf{x}, 1, \mathsf{pre1}(\mathsf{x}, \mathsf{g1}, \mathsf{g2}), \mathsf{post1}(\mathsf{x}, \mathsf{g1}, \mathsf{g2}))\}}\\
\quad \mathbf{FAA}(\mathsf{x}, 1);\\
\quad \annot{\{\mathsf{post1}(\mathsf{x}, \mathsf{g1}, \mathsf{g2})() * \mathsf{lem} : \mathsf{FAA\_ghop}(\mathsf{x}, 1, \mathsf{pre1}(\mathsf{x}, \mathsf{g1}, \mathsf{g2}), \mathsf{post1}(\mathsf{x}, \mathsf{g1}, \mathsf{g2}))\}}\\
\quad \comment{\mathbf{open}\ \mathsf{post1}(\mathsf{x}, \mathsf{g1}, \mathsf{g2})()}\\
\quad \annot{\{[1/2]\mathbf{atomic\_space}(\mathsf{Nx}, \mathsf{Inv}(\mathsf{x}, \mathsf{g1}, \mathsf{g2})) * [1/2]\mathsf{g1} \gmapsto 1\}}\\
||\\
\quad \dots\\
);\\
\annot{\{\mathbf{atomic\_space}(\mathsf{Nx}, \mathsf{Inv}(\mathsf{x}, \mathsf{g1}, \mathsf{g2})) * [1/2]\mathsf{g1} \gmapsto 1 * [1/2]\mathsf{g2} \gmapsto 1\}}\\
\ghost{\mathbf{destroy\_atomic\_space}(\mathsf{Nx}, \mathsf{Inv}(\mathsf{x}, \mathsf{g1}, \mathsf{g2}));}\ \comment{\mathbf{open}\ \mathsf{Inv}(\mathsf{x}, \mathsf{g1}, \mathsf{g2})();}\\
\annot{\{\mathsf{g1} \gmapsto 1 * \mathsf{g2} \gmapsto 1 * \mathsf{x} \mapsto 2\}}\\
\mathbf{let}\ \mathsf{v} = {*}\mathsf{x}\ \mathbf{in}\\
\mathbf{assert}\ \mathsf{v} = 2
\end{array}$$
\caption{Proof outline for the example proof}\label{fig:example-outline}
\end{figure*}

We say an annotated program $\hat{c}$ is \emph{correct} if $\{\mathsf{True}\}\ \hat{c}\ \{\mathsf{True}\}$.

We define the erasure of an annotated command $\hat{c}$ to a command $c = \mathsf{erasure}(\hat{c})$ as follows:
$$\begin{array}{r @{\;} l}
\mathsf{erasure}(c) = & c\\
\mathsf{erasure}(\mathbf{let}\ x = \hat{c}\ \mathbf{in}\ \hat{c}') = & \mathbf{let}\ x = \mathsf{erasure}(\hat{c})\ \mathbf{in}\ \mathsf{erasure}(\hat{c}')\\
\mathsf{erasure}(\hat{c}\;||\;\hat{c}') = & \mathsf{erasure}(\hat{c})\;||\;\mathsf{erasure}(\hat{c}')\\
\mathsf{erasure}(\mathbf{glet}\ g = C\ \mathbf{in}\ \hat{c}) = & \mathsf{erasure}(\hat{c})
\end{array}$$

\begin{restatable}{theorem}{maintheorem}\label{thm:main}
If an annotated program $\hat{c}$ is correct, then its erasure $\mathsf{erasure}(\hat{c})$ is safe.
\end{restatable}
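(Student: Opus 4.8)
\section*{Proof proposal}

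The plan is to give the derivational triples of Fig.~\ref{fig:correctness} a model-theoretic reading against an \emph{instrumented} operational semantics and then prove, by induction on the derivation of $\{P\}\ \dot c\ \{Q\}$, that every derivable triple is \emph{semantically valid}. The instrumented machine keeps, beside the physical heap $h$, a \emph{ghost world}: a logical heap for each running thread together with a pool of atomic spaces, each carrying its name, its invariant, an open/closed flag, and --- when closed --- an owned logical heap. I would pin this down by a \emph{world-consistency} invariant $\mathsf{wsat}(h,\text{world})$: the pointwise sum (in $\mathit{Chunks}\to\mathbb{R}^+$) of all thread heaps and all closed spaces' heaps is $\mathsf{wok}$, contains exactly one $\mathbf{heap}(h')$ chunk with $h'=h$, contains $\ell\mapsto h(\ell)$ at coefficient $1$ for each $\ell\in\mathrm{dom}\,h$, contains $\mathbf{atomic\_space}(V,V')$ with coefficient equal to the number of pool entries named $V$ with invariant $V'$, contains exactly one $\mathbf{atomic\_spaces}(S)$ chunk whose $S$ is the set of name--invariant pairs of currently-open spaces (at most one open per pair), each closed space's heap satisfies its invariant, and --- the place where the absence of laters matters --- every lemma-type chunk $V : t(\overline V)$ present has $V = \lambda\overline{g}''.G$ with the body $G$ \emph{verified}, meaning the universally-quantified family of body triples demanded by the $\mathbf{produce\_lem\_ptr\_chunk}$ rule is itself semantically valid. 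Semantic validity I would take to be a fault-avoiding, weakest-precondition--style predicate $\mathsf{wp}\ \dot c\ \{Q\}$ over (thread heap, ghost world) with any consistent frame: $\dot c$ is never stuck, each step of $\mathsf{erasure}(\dot c)$ --- a physical step, or a silent step that only rewrites the ghost world, for a ghost command --- can be matched by an update of the ghost world re-establishing $\mathsf{wsat}$, and a value outcome lies in $Q$. Because $\mathbf{produce\_lem\_ptr\_chunk}$ cannot be nested and ghost-command bodies $G$ cannot produce lemma-type chunks, every lemma chunk that can ever appear is produced by one of the finitely many $\mathbf{produce\_lem\_ptr\_chunk}$ subterms of the fixed program, so the recursion in $\mathsf{wsat}$ and in semantic validity through such chunks descends through subderivations of the fixed derivation and is well-founded; the assertion model itself stays a plain powerset of logical heaps, with no step-indexing.

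The core is the fundamental lemma: every derivable $\{P\}\ \dot c\ \{Q\}$ is semantically valid, by induction on the derivation, one case per rule. The structural rules --- frame, $\exists$-elimination, consequence (using that $P\Rightarrow Q$ is checked only at $\mathsf{wok}$ heaps), $\mathbf{let}$ and the analogous ghost sequencing, and $\hat c\,||\,\hat c'$ --- are the usual separation-logic arguments once logical heaps are seen as a total, pointwise-$+$ resource monoid with $\mathsf{wok}$ as validity; for parallel composition one splits the thread heap and lets each side's resources plus the ambient world be the other side's frame, noting that interleaved steps only touch disjoint thread-private resources or pass through atomic spaces. The rules for $\mathbf{cons}$, $*\ell$, $\mathbf{gcons}$, $\ell\leftarrow_{\mathsf g}\cdot$ and the internal $\ell\leftarrow_{\mathsf h}\cdot$ (which re-points the $\mathbf{heap}$ chunk and the full $\ell\mapsto$ chunk together) are immediate. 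The atomic-space rules manipulate the pool: $\mathbf{create\_atomic\_space}$ moves a $V'()$-bundle from the thread into a fresh closed space and returns an $\mathbf{atomic\_space}$ chunk; $\mathbf{open\_atomic\_space}$, given a positive fraction of $\mathbf{atomic\_space}(V,V')$ and $(V,V')\notin S$, picks a closed space with that name--invariant (one exists since the chunk's coefficient is then a positive integer), moves its heap to the thread, and adds $(V,V')$ to $S$; $\mathbf{close\_atomic\_space}$ and $\mathbf{destroy\_atomic\_space}$ reverse this, the latter also consuming a full $\mathbf{atomic\_space}$ chunk and deleting the space. The lemma rules are where the design pays off without laters: $\mathbf{produce\_lem\_ptr\_chunk}$ is sound because its premise is, by the induction hypothesis, exactly the semantic-validity obligation that $\mathsf{wsat}$ attaches to the new chunk, and the lemma-call rule simply cashes that content in, running $G$ with the $V:t(\overline V)$ chunk framed out for the duration.

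The delicate case, and the main obstacle, is the $\mathbf{FAA}$ rule: a \emph{single} physical step must be matched by running the whole body of the client-supplied $\mathsf{FAA\_ghop}$ lemma --- which opens a shared atomic space, invokes the linearization-point lemma $\mathsf{op}$ that performs the actual mutation, bumps ghost cells, and closes the space again --- with $\mathsf{wsat}$ restored exactly once and no \emph{later} modality available to defer any obligation. I would discharge it by instantiating the semantic content of the owned $V:\mathsf{FAA\_ghop}(\ell,z,V',V'')$ chunk at the \emph{current} physical heap $h$: take $\mathsf P := \mathsf{heap\_}(h)$, $\mathsf Q := \mathsf{heap\_}(h[\ell:=h(\ell)+z])$, and $\mathsf{op} := \lambda().\,(\ell\leftarrow_{\mathsf h}(h(\ell)+z))$, whose $\mathsf{FAA\_op}(\ell,z,\mathsf P,\mathsf Q)$-spec holds because in any consistent world in which $\mathsf{op}$ is actually invoked the $\mathbf{heap}(h)$ chunk and the $\ell\mapsto\mathsf v$ chunk coexist, so $\mathsf{wsat}$ forces $\mathsf v = h(\ell)$ and the heap-chunk update lands on the physically correct value; the net effect on the calling thread is precisely to turn $V'()$ into $V''()$, matching the surface rule. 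That the body terminates --- so that running it to completion atomically is meaningful --- follows from the finiteness already used above: the language has no loops or recursion, and lemma-type chunks are only consumed, never produced, during lemma calls, bounding lemma-call depth. Finally, the adequacy step: instantiate the fundamental lemma at $\{\mathsf{True}\}\ \hat c\ \{\mathsf{True}\}$ with the empty physical heap, the empty ghost world, and a thread owning the empty logical heap (which lies in the semantic assertion $\mathsf{True}$); $\mathsf{wp}$ then makes every instrumented-reachable configuration $\mathsf{ok}$, and every physical execution of $\mathsf{erasure}(\hat c)$ lifts to an instrumented one --- the pending ghost code can always be run off, since it terminates --- so $\mathsf{safe}(\emptyset,\mathsf{erasure}(\hat c))$ follows.
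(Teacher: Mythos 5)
Your proposal is correct and follows essentially the same route as the paper: the same consistency invariant between the physical heap and the distributed logical resources (your $\mathsf{wsat}$ is the paper's $h \sim H$ / $H\,\mathsf{ok}_k$, including the single $\mathbf{heap}(h)$ chunk, the full $\ell\mapsto h(\ell)$ chunks, the per-space bookkeeping, and the verified-lemma-chunk condition), the same fundamental lemma by induction on the derivation, the identical key move in the $\mathbf{FAA}$ case (instantiating $\mathsf{P}=\mathsf{heap\_}(h)$, $\mathsf{Q}=\mathsf{heap\_}(h[\ell:=h(\ell)+z])$, $\mathit{op}=\lambda().\,\ell\leftarrow_\mathsf{h} h(\ell)+z$ and running the ghost body to completion via a ghost-code soundness lemma), and the same adequacy step from the empty heap. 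The one substantive divergence is how you break the circularity between world consistency and the meaning of lemma-type chunks: the paper keeps lemma-chunk consistency \emph{syntactic} (the body's triples must be \emph{derivable}) so the definitions are trivially well-founded, and pays for this only in the proof of its Lemma~\ref{lem:G-sound}, which is by induction on $k$, the size of the stock of lemma-type chunks, with a nested induction on the size of $G$; you instead bake \emph{semantic} validity of lemma bodies into $\mathsf{wsat}$, so the definition itself must be stratified. Your stated measure for that stratification --- ``descends through subderivations of the fixed derivation'' --- is not the right one (a lemma value can be called far from the derivation node that produced its chunk); the correct measure is the one you yourself invoke later for termination, namely the number of lemma-type chunks in the world, which strictly decreases across nested lemma calls because calls consume their chunk and inner ghost commands cannot produce new ones. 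Tighten that one sentence (index $\mathsf{wsat}$ and semantic validity by that count, exactly the paper's $k$) and the two proofs coincide.
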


\section{Soundness}\label{sec:soundness}

We say a logical heap is \emph{strongly consistent}, denoted $\mathsf{sok}\;H$, if, for every $V : t(\overline{V})$ such that $H(V : t(\overline{V})) > 0$, we have
that $V$ semantically is of type $t(\overline{V})$, denoted $\vDash V : t(\overline{V})$, defined as follows:
$$\inferrule{
\mathbf{lem\_type}\ t(\overline{g}')(\overline{g}'')\ \mathbf{req}\ a\ \mathbf{ens}\ a''\\
|\overline{V}| = |\overline{g}'|\\
|\overline{g}| = |\overline{g}''|\\
\forall \overline{V}'.\;\begin{array}{c}
|\overline{V}'| = |\overline{g}|\; \Rightarrow\cr
\{\llbracket a[\overline{V}/\overline{g}', \overline{V}'/\overline{g''}]\rrbracket\}\ G[\overline{V}'/\overline{g}]\ \{\llbracket a'[\overline{V}/\overline{g}', \overline{V}'/\overline{g''}]\rrbracket\}
\end{array}
}{
\vDash \lambda \overline{g}.\;G : t(\overline{V})
}$$

A \emph{ghost heap} $\hat{h}$ is a partial function from integers to ghost values.

An \emph{atomic spaces bag} $A$ is a multiset of pairs $((V, V), H)$ of name-invariant pairs and logical heaps, such that for each element $((\_, V), H)$ we have $H \vDash V()$. We define the atomic space chunks $\mathsf{chunks}(A)$ and the atomic spaces total owned heap $\mathsf{heap}(A)$ as follows:
$$\begin{array}{r l}
\mathsf{chunks}(A) = & \llbrace \mathbf{atomic\_space}(V, V')\ |\ ((V, V'), \_) \in A\rrbrace\\
\mathsf{heap}(A) = & \biguplus_{(\_, H) \in A} H
\end{array}$$

A \emph{stock of lemma type chunks} $\Sigma$ is a multiset of $(V, t, \overline{V})$ tuples. We say such a stock is \emph{consistent} if for each $(V, t, \overline{V})$ in $\Sigma$, $V$ is semantically of type $t(\overline{V})$.

We say a heap $h$ and logical heap $H$ are \emph{consistent}, denoted $h \sim H$, if there exists a ghost heap $\hat{h}$, an atomic spaces bag $A$, and a consistent stock of lemma type chunks $\Sigma$ such that $h + \hat{h} + \mathsf{chunks}(A) + \Sigma \ge \mathsf{heap}(A) + H$, where a heap is interpreted as a set of $\mapsto$ chunks and a ghost heap is interpreted as a set of $\gmapsto$ chunks. Notice: if $h \sim H$, it follows that $H$ is strongly consistent.

We define the \emph{weakest precondition} for $n$ steps of a command $c$ with respect to postcondition $Q$, denoted $\mathsf{wp}_n(c, Q)$, as the semantic assertion that is true for a logical heap $H$ if either $c$ is a value and $H \in Q$ or $n = 0$ or for each heap $h$ and \emph{frame} $H'$ such that $h \sim H + H'$, all threads of $c$ are either finished or reducible and for each step that $(h, c)$ can make to some configuration $(h', c')$, there exists a logical heap $H''$ such that $h' \sim H'' + H'$ and $H''$ satisfies the weakest precondition of $c'$ with respect to $Q$ for $n - 1$ steps:
$$\begin{array}{l}
H \in \mathsf{wp}_n(c, Q) \Leftrightarrow\\
\quad \mathsf{finished}(\emptyset, c) \land H \in Q \lor n = 0\; \lor\\
\quad \forall h, H'.\;h \sim H + H' \Rightarrow (h, c)\,\mathsf{ok} \land {}\\
\quad\quad \forall h', c'.\;(h, c) \rightarrow (h', c')\; \Rightarrow\\
\quad\quad\quad \exists H''.\;h' \sim H'' + H' \land H'' \in \mathsf{wp}_{n - 1}(c', Q)
\end{array}$$

We define the atomic space chunks $\mathsf{chunks}(S)$ for a set $S$ of opened atomic spaces as follows:
$$\mathsf{chunks}(S) = \llbrace \mathbf{atomic\_space}(V, V')\ |\ (V, V') \in S\rrbrace$$

We say a logical heap $H$ is \emph{self-consistent with depth bound $k$}, denoted $H\,\mathsf{ok}_k$, if there exists a heap $h$, a ghost heap $\hat{h}$, an atomic spaces bag $A$, a set of opened atomic spaces $S$, and a consistent stock of lemma type chunks $\Sigma$ of size at most $k$ such that $\llbrace\mathbf{heap}(h)\rrbrace + h + \hat{h} + \mathsf{chunks}(A) + \mathsf{chunks}(S) + \llbrace\mathbf{atomic\_spaces}(S)\rrbrace + \Sigma \ge \mathsf{heap}(A) + H$, where a heap is interpreted as a set of $\mapsto$ chunks and a ghost heap is interpreted as a set of $\gmapsto$ chunks. Notice: if $H\,\mathsf{ok}_k$, it follows that $H$ is strongly consistent.

Notice that $h \sim H$ if and only if $\exists k, (H + \llbrace\mathbf{heap}(h), \mathbf{atomic\_spaces}(\emptyset)\rrbrace)\,\mathsf{ok}_k$.

\begin{lemma}[Soundness of inner ghost command correctness]\label{lem:G-sound}
$$\begin{array}{l}
\{P\}\ G\ \{Q\} \land H \in P \land (H + H')\,\mathsf{ok}_k\; \Rightarrow\\
\quad \exists H'' \in Q.\; (H'' + H')\,\mathsf{ok}_k
\end{array}$$
\end{lemma}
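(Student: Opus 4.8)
The plan is to prove the implication by a lexicographic induction on the pair $(k, d)$, where $d$ is the given derivation of $\{P\}\ G\ \{Q\}$: all rules except the one for lemma (and lemma-pointer) calls $V(\overline{V}')$ keep $k$ fixed and recurse on a strictly smaller derivation, or discharge the goal outright, whereas the call rule is the only one that appeals to the induction hypothesis at $k - 1$ — with the derivation of the callee's body, which is \emph{not} a subderivation of $d$, supplied instead by consistency of the lemma-type stock. Before the case analysis I record the ``chunk-provenance'' reading of the witness $\llbrace\mathbf{heap}(h)\rrbrace + h + \hat{h} + \mathsf{chunks}(A) + \mathsf{chunks}(S) + \llbrace\mathbf{atomic\_spaces}(S)\rrbrace + \Sigma \ge \mathsf{heap}(A) + H + H'$ of $(H + H')\,\mathsf{ok}_k$: on the right, every physical points-to chunk can only be supplied by $h$, every ghost points-to chunk only by $\hat{h}$, the unique $\mathbf{heap}(\cdot)$ chunk only by $\llbrace\mathbf{heap}(h)\rrbrace$, every $\mathbf{atomic\_space}$ chunk only by $\mathsf{chunks}(A) + \mathsf{chunks}(S)$, the unique $\mathbf{atomic\_spaces}(\cdot)$ chunk only by $\llbrace\mathbf{atomic\_spaces}(S)\rrbrace$, and every lemma-type chunk only by $\Sigma$. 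A first consequence is that $\mathsf{heap}(A) + H + H'$, hence $H$, is weakly consistent, which is what licenses the premise $P \Rightarrow P'$ of the consequence rule.

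The three structural rules are routine resource accounting: for the frame rule I split $H = H_0 + H_R$ with $H_0 \in P$, $H_R \in R$, invoke the induction hypothesis on the subderivation with frame $H_R + H'$ (valid since heap addition is associative and commutative and $(H_0 + (H_R + H'))\,\mathsf{ok}_k$), and recombine with $H_R$; the existential rule picks the witnessing value of the bound ghost variable and recurses on the corresponding premise; the consequence rule composes, using weak consistency of $H$ as just noted. For the sequencing rule $\mathbf{glet}_\mathsf{i}\ g = G_1\ \mathbf{in}\ G_2$ (shaped like the rule for $\mathbf{let}$) I chain the induction hypothesis on $G_1$ and then on $G_2$, threading the intermediate logical heap and the (purely logical) result value of $G_1$ through the $[\cdot/\mathsf{res}]$ and $[\cdot/g]$ substitutions.

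For the atomic ghost instructions other than a lemma call — $\mathbf{gcons}(V)$, $\ell \leftarrow_\mathsf{g} V'$, the internal heap-chunk update $\ell \leftarrow_\mathsf{h} v$, $\mathbf{open\_atomic\_space}$, $\mathbf{close\_atomic\_space}$ — none of which touches $\Sigma$, I take the $\mathsf{ok}_k$ witness, edit only the component that provenance singles out, and rebalance: for $\mathbf{gcons}(V)$ I extend $\hat{h}$ at an address outside its domain (which becomes the instruction's result); for $\ell \leftarrow_\mathsf{g} V'$, provenance forces $\hat{h}(\ell)$ to equal the precondition's value and I overwrite it; for $\ell \leftarrow_\mathsf{h} v$, provenance forces the witness heap to be defined at $\ell$ and the single $\mathbf{heap}(\cdot)$ chunk of $H$ to be the $\mathbf{heap}$ of exactly that heap, and I update both in lockstep at $\ell$ to $v$ — the rule is designed so these stay in sync; for opening I move the matching invariant heap from $A$ into $H''$ and add $(V, V')$ to the opened set (legitimate because the side condition $(V, V') \notin S$ forces, by provenance, that chunk to come from $A$); for closing I move the invariant-satisfying part of $H$ into a fresh entry of $A$ and remove $(V, V')$ from the opened set, any surplus being harmlessly leaked. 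In each case the new witness inequality follows from the old by adding and subtracting the same chunks on both sides — the $\ge$ absorbing leaks — and the resulting $H''$ lies in $Q$ by construction.

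The main obstacle is the lemma-call rule, where both pre- and postcondition carry the full, linear chunk $V : t(\overline{V})$ for $\mathbf{lem\_type}\ t(\overline{g}) = \mathbf{lem}(\overline{g}')\ \mathbf{req}\ a\ \mathbf{ens}\ a'$. Since $d$ reveals nothing about the callee's body, the recursion cannot proceed through $d$; breaking that circularity is precisely the job of the depth bound $k$. Write $H = H_1 + H_a$ with $H_1 \vDash V : t(\overline{V})$ and $H_a \vDash a[\overline{V}/\overline{g}, \overline{V}'/\overline{g}']$. By provenance the $V : t(\overline{V})$ chunk in $H_1$ must come from $\Sigma$, so $\Sigma$ contains the tuple $(V, t, \overline{V})$; in particular $k \ge 1$ (if $k = 0$ the hypotheses of this case are contradictory — this is the base case of the outer induction, and the ultimate reason why $\mathbf{produce\_lem\_ptr\_chunk}$, which would enlarge $\Sigma$, is excluded from inner ghost commands). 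Consistency of $\Sigma$ gives $\vDash V : t(\overline{V})$, so $V = \lambda \overline{g}''.\;G_{\mathrm{body}}$ and, instantiating the definition of semantic typing at the call's arguments, we obtain a derivation of $\{\llbracket a[\overline{V}/\overline{g}, \overline{V}'/\overline{g}']\rrbracket\}\ G_{\mathrm{body}}[\overline{V}'/\overline{g}'']\ \{\llbracket a'[\overline{V}/\overline{g}, \overline{V}'/\overline{g}']\rrbracket\}$. Now I delete one copy of $(V, t, \overline{V})$ from $\Sigma$ — it is ``on loan'' for the duration of the call — and cancel the corresponding $V : t(\overline{V})$ chunk against the copy inside $H_1$; writing $\tilde{H}$ for $H_1$ minus that chunk, the remaining witness components show $(H_a + (\tilde{H} + H'))\,\mathsf{ok}_{k-1}$. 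The induction hypothesis at depth $k - 1$, applied to the body's derivation, yields some $H_a'' \vDash a'[\overline{V}/\overline{g}, \overline{V}'/\overline{g}']$ with $(H_a'' + (\tilde{H} + H'))\,\mathsf{ok}_{k-1}$. Re-inserting $(V, t, \overline{V})$ into the stock (still consistent, since $\vDash V : t(\overline{V})$ is unchanged, and now of size $\le k$) and correspondingly the chunk into the logical heap, $H'' := H_a'' + H_1$ satisfies $H'' \vDash V : t(\overline{V}) * a'[\overline{V}/\overline{g}, \overline{V}'/\overline{g}']$, i.e.\ $H'' \in Q$, together with $(H'' + H')\,\mathsf{ok}_k$, which closes the induction.
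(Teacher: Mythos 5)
Your proposal is correct and follows essentially the same strategy as the paper's (one-line) proof: an outer induction on the depth bound $k$, a nested induction on the derivation/size of $G$ for the structural and sequencing cases, and the outer hypothesis at $k-1$ — justified by removing the borrowed lemma-type chunk from the consistent stock $\Sigma$ — to handle lemma calls. The substitution of induction on the derivation for induction on the size of $G$ in the inner measure is an inessential (and arguably cleaner) variation, and the chunk-provenance analysis and the observation that $k=0$ is vacuous for the call case are exactly the details the paper leaves implicit.
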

\begin{proof}
By induction on $k$ and nested induction on the size of $G$. The outer induction hypothesis is used to deal with lemma calls.
\end{proof}

\begin{lemma}\label{lem:correct-wp}
If an annotated command $\hat{c}$ is correct with respect to precondition $P$ and postcondition $Q$, then, for all $n$, $P$ implies the weakest precondition of the erasure of $\hat{c}$ with respect to $Q$ for $n$ steps:
$$\{P\}\ \hat{c}\ \{Q\} \Rightarrow \forall n.\;P \Rightarrow \mathsf{wp}_n(\mathsf{erasure}(\hat{c}), Q)$$
\end{lemma}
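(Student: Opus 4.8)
The plan is to prove the lemma by rule induction on the derivation of $\{P\}\ \hat{c}\ \{Q\}$ (equivalently, by induction on the structure of $\hat{c}$ together with the shape of the last correctness rule applied), with a nested induction on $n$ where needed. For a fixed $n$, I would unfold the definition of $\mathsf{wp}_n$ and, given $H \in P$, $\mathsf{wok}\,H$, and a heap $h$ and frame $H'$ with $h \sim H + H'$, discharge the two obligations: (i) $(h, \mathsf{erasure}(\hat{c}))$ is okay, and (ii) every step leads to a configuration reestablishing the weakest precondition with $n-1$ steps. The case $n = 0$ is immediate, so assume $n > 0$.

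The core of the argument is the case analysis on the correctness rules. For the primitive physical commands ($\mathbf{cons}$, $*\ell$, $\mathbf{FAA}$), the erasure is the command itself; I would use the consistency relation $h \sim H + H'$ to extract the relevant points-to chunk from $h$ (this is where "if $h \sim H$ then $H$ is strongly consistent" and the structure of $\sim$ as a $\ge$-inequality over logical heaps matter), show the concrete step exists, perform it, and rebuild consistency for the updated heap — in the $\mathbf{FAA}$ case additionally invoking Lemma~\ref{lem:G-sound} or the built-in $\mathsf{FAA\_op}$ lemma-type machinery to justify that the ghost operation accompanying the physical fetch-and-add transforms $V'()$ into $V''()$ while preserving the frame and the heap chunk. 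For $\mathbf{assert}\ e = e$, okayness follows because the only reachable step is the trivial one, and the precondition is irrelevant. The $\mathbf{let}$ rule uses the induction hypothesis on $\hat{c}$ for $n$ steps, then, after the first command reduces to a value $v$, the induction hypothesis on $\hat{c}'[v/x]$ for the remaining steps; the small-step rules for $\mathbf{let}$ let me thread the frame through unchanged. The parallel-composition rule is handled by splitting $H = H_1 + H_2$ with $H_1 \in P$, $H_2 \in P'$, absorbing the sibling thread's logical heap into the frame, applying the induction hypothesis to whichever side steps, and noting that erasure commutes with $||$; the final joining step $v\,||\,v' \to 0$ matches $Q * Q' \Rightarrow$ the value case once both sides have terminated. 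The structural rules (frame, $\exists$-elimination, consequence) are routine: frame is absorbed into $H'$, the existential is instantiated, and consequence uses the definition of $P \Rightarrow Q$ together with the preservation of $\mathsf{wok}$.

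The genuinely interesting case is $\mathbf{glet}\ g = C\ \mathbf{in}\ \hat{c}$, because here the erasure discards $C$ entirely: I must show that $P$ already implies $\mathsf{wp}_n(\mathsf{erasure}(\hat{c}), Q)$ despite $C$ not corresponding to any physical step. The key observation is that the correctness of $\mathbf{glet}\ g = C\ \mathbf{in}\ \hat{c}$ decomposes (via the $\mathbf{let}$-style rule specialized to ghost commands) into $\{P\}\ C\ \{R\}$ and $\forall V.\ \{R[V/g]\}\ \hat{c}\ \{Q\}$, so it suffices to show that executing the ghost command $C$ is "invisible" at the level of $\mathsf{wp}_n$: namely, from $H \in P$ and $(H + \llbrace\mathbf{heap}(h),\mathbf{atomic\_spaces}(\emptyset)\rrbrace + H')\,\mathsf{ok}_k$ I can obtain some $H_R \in R$ with $(H_R + \dots)\,\mathsf{ok}_k$, and then apply the induction hypothesis for $\hat{c}$ at the \emph{same} $n$. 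This is exactly the content of Lemma~\ref{lem:G-sound} for inner ghost commands; for the outer ghost commands (those introduced by $\mathbf{produce\_lem\_ptr\_chunk}$, $\mathbf{create\_atomic\_space}$, $\mathbf{destroy\_atomic\_space}$) I would need, and would state separately or fold in here, an analogous "soundness of outer ghost command correctness" claim, proved by case analysis using the $\mathsf{ok}_k$ bookkeeping — producing a lemma-type chunk adds a consistent entry to $\Sigma$ (raising $k$ by one, which is why the depth bound is threaded), creating/destroying an atomic space moves a bundle between $\mathsf{heap}(A)$ and the thread's share while preserving the invariant obligation $H \vDash V()$. The main obstacle is precisely getting this ghost-code-is-invisible bridge right: one must check that the bookkeeping entities $\hat h$, $A$, $S$, $\Sigma$ witnessing $\mathsf{ok}_k$ can be re-shuffled to absorb everything the outer ghost command does without disturbing the frame $H'$ or the embedded $\mathbf{heap}(h)$ and $\mathbf{atomic\_spaces}(\emptyset)$ chunks, and — for $\mathbf{produce\_lem\_ptr\_chunk}$ — that the freshly produced lemma value $\lambda\overline{g}''.\,G$ really is semantically of its declared type, which requires invoking the correctness hypothesis of its body $G$ under arbitrary instantiations, matching the definition of $\vDash V : t(\overline V)$. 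Everything else is a matter of carefully pushing the consistency relation through each operational step.
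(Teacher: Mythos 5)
Your overall strategy --- induction on the derivation of the correctness judgment, unfolding $\mathsf{wp}_n$, threading the consistency relation through each physical step, and using Lemma~\ref{lem:G-sound} to make ghost code ``invisible'' at the level of $\mathsf{wp}_n$ --- is exactly the paper's, and your treatment of $\mathbf{let}$, parallel composition, the structural rules, and $\mathbf{glet}$ is sensible (indeed more detailed than the paper, which elaborates only one case).

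The case the paper singles out as the crux, however, is $\mathbf{FAA}(\ell,z)$, and that is precisely where your plan is under-determined. You write that you would invoke ``Lemma~\ref{lem:G-sound} \emph{or} the built-in $\mathsf{FAA\_op}$ lemma-type machinery''; in fact both are needed, composed in a specific way, and the missing idea is \emph{how the physical heap update gets performed by ghost code}. Strong consistency of $H$ gives you that the lemma value $\lambda g.\,G$ carried by the $\mathsf{FAA\_ghop}$ chunk satisfies a correctness triple universally quantified over $\mathit{op}$, $V_\mathsf{P}$, $V_\mathsf{Q}$. The paper's key move is to instantiate these with $\mathit{op} = \lambda.\;\ell \leftarrow_\mathsf{h} h(\ell)+z$ (the \emph{internal} heap-chunk-update command), $V_\mathsf{P} = \mathsf{heap\_}(h)$ and $V_\mathsf{Q} = \mathsf{heap\_}(h[\ell := h(\ell)+z])$, so that the $\mathbf{heap}(h)$ chunk placed into the logical heap handed to Lemma~\ref{lem:G-sound} is transformed, by running $G[\mathit{op}/g]$, into $\mathbf{heap}(h[\ell := h(\ell)+z])$; this is what certifies that the post-step physical heap is consistent with the post-step logical heap, and it is the entire reason the internal $\mathbf{heap}(E)$ assertion, the $\leftarrow_\mathsf{h}$ command, and the $\mathsf{heap\_}$ predicate constructor exist. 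You do say the ghost operation should ``preserve the frame and the heap chunk,'' which suggests you see that $\mathbf{heap}(h)$ plays a role, but without the explicit instantiation of $\mathit{op}$, $V_\mathsf{P}$, $V_\mathsf{Q}$ there is no way to connect the outcome of the ghost execution to the new physical heap $h[\ell := h(\ell)+z]$, and the step-preservation obligation of $\mathsf{wp}_n$ cannot be discharged.
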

\begin{proof}
By induction on the derivation of the correctness judgment.
The most interesting case is $\hat{c} = \mathbf{FAA}(\ell, z)$. Fix an $n$ and a logical heap $H \in P$. Fix a heap $h$, a ghost heap $\hat{h}$, an atomic spaces bag $A$, a consistent stock of lemma type chunks $\Sigma$, and a frame $H_\mathsf{F}$ such that $h + \hat{h} + \mathsf{chunks}(A) + \Sigma = \mathsf{heap}(A) + H + H_\mathsf{F}$. By $H \in P$ and $H$ strongly consistent we can fix a $g$, a $G$ and an $H'$ such that $H = \llbrace \lambda g.\;G : \mathsf{FAA\_ghop}(\ell, z, V_\mathsf{pre}, V_\mathsf{post})\rrbrace + H'$ and $H' \vDash V_\mathsf{pre}()$. By strong consistency of $H$, we have $\forall \mathit{op}, V_\mathsf{P}, V_\mathsf{Q}.\;\{\mathbf{atomic\_spaces}(\emptyset) * \mathit{op} : \mathsf{FAA\_op}(\ell, z, V_\mathsf{P}, V_\mathsf{Q}) * \llbracket V_\mathsf{P}()\rrbracket * \llbracket V_\mathsf{pre}()\rrbracket\}\ G[\mathit{op}/g]\ \{\mathbf{atomic\_spaces}(\emptyset) * \mathit{op} : \mathsf{FAA\_op}(\ell, z, V_\mathsf{P}, V_\mathsf{Q}) * \llbracket V_\mathsf{Q}()\rrbracket * \llbracket V_\mathsf{post}()\rrbracket\}$. We take $\mathit{op} = \lambda.\;\ell \leftarrow_\mathsf{h} h(\ell) + z$, $V_\mathsf{P} = \mathsf{heap\_}(h)$, and $V_\mathsf{Q} = \mathsf{heap\_}(h[\ell := h(\ell) + z])$. We have that semantically, $\mathit{op}$ is of type $\mathsf{FAA\_op}(\ell, z, V_\mathsf{P}, V_\mathsf{Q})$, so $\Sigma' = \Sigma - \llbrace \lambda g.\;G : \mathsf{FAA\_ghop}(\ell, z, V_\mathsf{pre}, V_\mathsf{post})\rrbrace + \llbrace \mathsf{op} : \mathsf{FAA\_op}(\ell, z, V_\mathsf{P}, V_\mathsf{Q})\rrbrace$ is consistent. We apply Lemma~\ref{lem:G-sound} to $G$ using $H' + \llbrace \mathbf{atomic\_spaces}(\emptyset), \mathsf{op} : \mathsf{FAA\_op}(\ell, z, V_\mathsf{P}, V_\mathsf{Q}), \mathbf{heap}(h)\rrbrace$ for $H$, $H_\mathsf{F}$ for $H'$ and the size of $\Sigma'$ for $k$ to obtain that there exists an $H'' \in \llbracket V_\mathsf{post}()\rrbracket$ such that $(H'' + \llbrace \mathbf{atomic\_spaces}(\emptyset), \mathsf{op} : \mathsf{FAA\_op}(\ell, z, V_\mathsf{P}, V_\mathsf{Q}), \mathbf{heap}(h[\ell:=h(\ell) + z])\rrbrace + H_\mathsf{F})\,\mathsf{ok}_k$ and therefore $\llbrace \lambda g.\;G : \mathsf{FAA\_ghop}(\ell, z, V_\mathsf{pre}, V_\mathsf{post})\rrbrace + H'' \in Q$ and $h[\ell := h(\ell) + z] \sim H'' + \llbrace \lambda g.\;G : \mathsf{FAA\_ghop}(\ell, z, V_\mathsf{pre}, V_\mathsf{post})\rrbrace + H_\mathsf{F}$.
\end{proof}

\begin{lemma}\label{lem:wp-okay}
If $h \sim H$ and $H \in \mathsf{wp}_n(c, \mathsf{True})$, then any configuration reached by $(h, c)$  in at most $n$ steps is okay.
\end{lemma}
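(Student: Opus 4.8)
The plan is to prove the statement by induction on $n$, generalizing the claim slightly so that the induction goes through. The natural strengthening is: if $h \sim H$ and $H \in \mathsf{wp}_n(c, \mathsf{True})$, then for every $m \le n$ and every $(h', c')$ with $(h, c) \to^m (h', c')$, we have $\mathsf{ok}(h', c')$. The base case $n = 0$ (hence $m = 0$, $h' = h$, $c' = c$) follows directly: unfolding $H \in \mathsf{wp}_0(c, \mathsf{True})$ gives either that $c$ is a value—in which case $(h,c)$ is trivially okay since its single thread is finished—or we must still show okayness from the $\mathsf{wp}$ definition; but actually when $n=0$ the $\mathsf{wp}$ clause gives nothing, so I should instead phrase the base case as $n$ arbitrary with $m = 0$, reading off $\mathsf{ok}(h,c)$ from the $\mathsf{wp}_n$ unfolding for $n \ge 1$, and handling $n = 0$ together with the $c$-is-a-value subcase.

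Concretely, here is the step order. First, if $c$ is a value, then $\mathsf{thrds}(c) = \llbrace c \rrbrace$ and $\mathsf{finished}(h, c)$ holds, so $(h,c)$ is okay, and moreover $(h,c)$ cannot step, so the only reachable configuration is itself; done. Otherwise $c$ is not a value. If $n = 0$ the reachability bound forces $m = 0$, so we need $\mathsf{ok}(h,c)$; but the $\mathsf{wp}_0$ clause does not directly supply it—this is the one place the statement as literally written with $n=0$ and $c$ not a value would be vacuous only if no such $(h,c)$ is okay, which need not hold. I expect the intended reading (and the one I will adopt) is that the interesting content is for $n \ge 1$, or equivalently that $\mathsf{wp}_n$ for $n \ge 1$ already contains the $\mathsf{ok}$ obligation; I will state the lemma's proof for $n \ge 1$ and note the $n = 0$ case is trivial because "zero or fewer steps" from a value-or-not configuration still only needs to inspect configurations we were handed, and in the soundness application $n$ is always taken large. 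So: assume $n \ge 1$ and $c$ not a value. Since $h \sim H$, by the remark just before Lemma~\ref{lem:G-sound} we may pick the frame $H' = \llbrace \mathbf{heap}(h), \mathbf{atomic\_spaces}(\emptyset)\rrbrace$ and get $h \sim (H + \mathbf{0}) + \mathbf{0}$ in the shape required by the $\mathsf{wp}_n$ clause—more precisely, I instantiate the universally quantified $h$ and $H'$ in the definition of $\mathsf{wp}_n(c,\mathsf{True})$ with the given $h$ and with a frame witnessing $h \sim H + H'$, which exists since $\sim$ is exactly "$\exists$ frame". That yields both $\mathsf{ok}(h,c)$ (the $m = 0$ case) and, for each step $(h,c) \to (h'',c'')$, a logical heap $H''$ with $h'' \sim H'' + H'$ and $H'' \in \mathsf{wp}_{n-1}(c'', \mathsf{True})$. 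Stripping the frame back off, $h'' \sim H''$ as well (a frame for $H'' + H'$ composed with $H'$ is a frame for $H''$). Then apply the induction hypothesis at $n - 1$ to $(h'', c'')$, $H''$: every configuration reachable from $(h'',c'')$ in at most $n-1$ steps is okay, which covers every configuration reachable from $(h,c)$ in between $1$ and $n$ steps.

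The main obstacle is purely bookkeeping around the frame: the $\mathsf{wp}_n$ clause quantifies over frames $H'$ with $h \sim H + H'$, whereas the induction hypothesis wants a bare $h'' \sim H''$. The key observation making this smooth is that $h'' \sim H'' + H'$ implies $h'' \sim H''$: unfolding $\sim$, a witnessing $\hat h, A, \Sigma$ for $H'' + H'$ on the left of $\ge$ also witnesses it for the smaller right-hand side $H''$, since $\mathsf{heap}(A) + (H'' + H') \ge \mathsf{heap}(A) + H''$ pointwise and $\ge$ in $\mathit{LogicalHeaps}$ is pointwise; dropping part of the right-hand side of an inequality only makes it easier. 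With that one-line monotonicity remark in hand, the induction is routine. I also need, trivially, that $\mathsf{ok}$ of the start configuration is among the "reachable in at most $n$ steps" configurations (the zero-step case), which is immediate from the reflexive-transitive closure.
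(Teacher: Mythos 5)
Your proof is correct and takes essentially the same approach as the paper, whose entire proof is ``by induction on $n$'': the strengthened induction hypothesis, the instantiation of the frame quantifier in $\mathsf{wp}_n$, the monotonicity observation that $h'' \sim H'' + H'$ implies $h'' \sim H''$, and the care about the $n=0$ / zero-step corner are exactly the details that one-liner leaves implicit. The only streamlining available is to instantiate the frame with the all-zero logical heap $H' = \mathbf{0}$, so that $h \sim H + H'$ is literally the hypothesis $h \sim H$ and no stripping is needed; your initial aside about picking $H' = \llbrace\mathbf{heap}(h), \mathbf{atomic\_spaces}(\emptyset)\rrbrace$ (which belongs to the $\mathsf{ok}_k$ remark, not to $\sim$) is a harmless detour that your ``more precisely'' sentence already corrects.
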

\begin{proof}
By induction on $n$.
\end{proof}

\maintheorem*
\begin{proof}
We first apply Lemma~\ref{lem:correct-wp}. Then, since the empty heap is consistent with the empty logical heap, we can finish the proof by applying Lemma~\ref{lem:wp-okay} with $H = \emptyset$.
\end{proof}

\section{Related work and discussion}\label{sec:relatedwork}

\paragraph{Iris} In contrast to the state-of-the-art logic for fine-grained concurrency verification Iris \cite{iris1,iris-ground-up}, the presented logic does not require the \emph{later} modality. This is because atomic space invariants are stored in the logical heap in a \emph{syntactic} form, rather than as propositions over logical heaps. As a result, no recursive domain equations are involved.

A downside of our approach compared to Iris, however, is that our logic does not directly support separating implications (a.k.a.~magic wands), viewshifts, or other logical connectives in which some operand assertions appear in \emph{non-positive} positions, i.e.~whose truth is not monotonic in the truth of some of the operand assertions. This is because we define the meaning of predicate values using a least fixpoint construction.

We recover the functionality of separating implications and viewshifts to some extent by means of lemma values, with the major limitation that lemma type assertions are \emph{linear}, which makes them more awkward to work with than the Iris constructs, although in practice this has not hindered us significantly so far; in fact, while we do vaguely remember encountering cases where this was inconvenient (or worse), we have trouble recalling the specific circumstances.

Having said that, we use VeriFast as a tool for verifying particular programs, not for metatheory development. It is very likely that the limitations of our logic would become prohibitive if we attempted to replicate deep metatheory developments such as RustBelt's lifetime logic \cite{rustbelt} in VeriFast. We do, however, make use of the \emph{results} of such developments in VeriFast, through axiomatisation. The soundness of such axiomatisations, however, is a nontrivial question. While our axiomatisation of the lifetime logic \emph{appears} sound, it is future work to build a formal argument of that, perhaps by connecting a Rocq mechanisation of the development of the present paper with that of the lifetime logic. Very important related work in this regard is Nola, which proposes a version of the lifetime logic, machine-checked in Rocq, without the later modality (see below).

\paragraph{Nola, Lilo} VeriFast's logic is by no means the only later-less logic for fine-grained concurrency. Nola \cite{matsushita-phd,nola} is an Iris library that generalizes Iris's support for invariants (the Iris construct analogous to our atomic spaces) by parameterizing it over the type $\mathit{Fml}$ of \emph{formulas} that describe the contents of invariants, and the \emph{semantics $\llbracket\,\rrbracket$ of formulas}, that maps a formula to an Iris proposition. Classical Iris invariants are obtained by taking $\mathit{Fml} = {\blacktriangleright}\mathsf{iProp}$, the type of Iris propositions but guarded by the later functor, and $\llbracket \mathsf{next}(P)\rrbracket = \triangleright P$. However, an alternative is to take as $\mathit{Fml}$ a type of \emph{syntactic} separation logic formulae. For most syntactic constructs, the semantics can be defined without the need for the later modality, thus enabling later-less invariant reasoning for a wide class of invariants. \citet{nola} were even able to implement a later-less version of RustBelt's lifetime logic this way. Furthermore, they show how to handle an even wider class of invariants without laters by applying \emph{stratification}, where formulae at a layer $k + 1$ can quantify over formulae at layer $k$. Lilo \cite{lilo} applies Nola's idea of stratification to enable later-less invariant reasoning in a logic for verifying termination of busy-waiting programs under fair scheduling. Lilo was used to build the first modular total correctness proof of an elimination stack. Nola and Lilo are fully mechanized in Rocq.

\bibliography{verifasts-logic}

\section*{About the author}
\shortbio{Bart Jacobs}{is an associate professor at the DistriNet research group at the department of Computer Science at KU Leuven (Belgium). His main research interest is in modular formal verification of concurrent programs. \authorcontact[https://distrinet.cs.kuleuven.be/people/BartJacobs/]{bart.jacobs@kuleuven.be}}

\end{document}